  \providecommand\BibTeX{{%
    \normalfont B\kern-0.5em{\scshape i\kern-0.25em b}\kern-0.8em\TeX}}}
\begin{document}

\title{RMB-DPOP: Refining MB-DPOP by Reducing Redundant Inferences}  


\author{Ziyu Chen}
\affiliation{
  \institution{College of Computer Science,
  Chongqing University}
  \city{Chongqing} 
  \state{China} 
  \postcode{400044}
}
\email{chenziyu@cqu.edu.cn}

\author{Wenxin Zhang}
\affiliation{
  \institution{College of Computer Science,
  Chongqing University}
  \city{Chongqing} 
  \state{China} 
  \postcode{400044}
}
\email{wenxinzhang18@163.com}

\author{Yanchen Deng}
\authornote{Corresponding author}
\affiliation{
  \institution{College of Computer Science,
  Chongqing University}
  \city{Chongqing} 
  \state{China} 
  \postcode{400044}
}
\email{dyc941126@126.com}

\author{Dingding Chen}
\affiliation{
  \institution{College of Computer Science,
  Chongqing University}
  \city{Chongqing} 
  \state{China} 
  \postcode{400044}
}
\email{dingding@cqu.edu.cn}

\author{Qing Li}
\affiliation{
  \institution{College of Electrical Engineering,
  Chongqing University}
  \city{Chongqing} 
  \state{China} 
  \postcode{400044}
}
\email{qiangli.ac@gmail.com}


\begin{abstract}
  MB-DPOP is an important complete algorithm for solving Distributed Constraint Optimization Problems (DCOPs) by exploiting a cycle-cut idea to implement memory-bounded inference. However, each cluster root in the algorithm is responsible for enumerating all the instantiations of its cycle-cut nodes, which would cause redundant inferences when its branches do not have the same cycle-cut nodes. Additionally, a large number of cycle-cut nodes and the iterative nature of MB-DPOP further exacerbate the pathology. As a result, MB-DPOP could suffer from huge coordination overheads and cannot scale up well. Therefore, we present RMB-DPOP which incorporates several novel mechanisms to reduce redundant inferences and improve the scalability of MB-DPOP. First, using the independence among the cycle-cut nodes in different branches, we distribute the enumeration of instantiations into different branches whereby the number of nonconcurrent instantiations reduces significantly and each branch can perform memory bounded inference asynchronously. Then, taking the topology into the consideration, we propose an iterative allocation mechanism to choose the cycle-cut nodes that cover a maximum of active nodes in a cluster and break ties according to their relative positions in a pseudo-tree. Finally, a caching mechanism is proposed to further reduce unnecessary inferences when the historical results are compatible with the current instantiations. We theoretically show that with the same number of cycle-cut nodes RMB-DPOP requires as many messages as MB-DPOP in the worst case and the experimental results show our superiorities over the state-of-the-art.
\end{abstract}

\keywords{DCOP, complete algorithms, memory-bounded inference} 

\maketitle


\section{Introduction}
Distributed constraint optimization problems (DCOPs) are a fundamental framework for coordinated and cooperative multi-agent systems. DCOPs have been successfully applied to model many real-world problems including sensor networks \cite{dsa2005}, meeting scheduling \cite{taskmodeling2007}, smart grid \cite{grid2017}, etc. 

Incomplete algorithms for DCOPs \cite{mgm2004,dsa2005,k-optimal2007,maxsum2008,dgibbs2019,chen2018class} aim to find a good solution in an acceptable overhead, while complete algorithms focus on finding the optimal one by employing either search or inference to systematically explore the entire solution space. SBB \cite{sbb1997}, AFB \cite{afb2009}, PT-FB \cite{ptfb2017}, ADOPT \cite{adopt2005} and its variants \cite{adopt_trading2009,adopt_caching2009,bnbadopt2010,bnbadopt_generalizing2011,bnbadopt_remove2012} are typical search-based algorithms that employ distributed backtrack search to exhaust the search space. However, these algorithms incur a prohibitively large number of messages and can only solve the problems with a handful of variables.

On the other hand, inference-based complete algorithms like DPOP \cite{dpop2005} perform dynamic-programming on a pseudo-tree and only require a linear number of messages. However, the memory consumption in DPOP is exponential to the induced width \cite{cyclecut2003}, which makes it not applicable for the memory-limited scenarios where the optimal solution is desired \cite{heterogeneous2018,vessel2016}. Therefore, a number of algorithms \cite{adpop2005,lsdpop2007,mbdpop2007,odpop2006,funcFilter2010} were proposed to trade either solution quality or message number for smaller memory consumption. Among these algorithms, MB-DPOP \cite{mbdpop2007} iteratively performs memory-bounded utility propagation to guarantee the optimality. Specifically, given the dimension limit $k$, the algorithm first identifies high-width areas (clusters) and cycle-cut nodes \cite{cyclecut2003} to make the maximal dimension of the utility tables propagated within clusters no greater than $k$. For each cluster, the cluster root is responsible for iteratively enumerating all the instantiations of its cycle-cut nodes, and nodes in the cluster perform memory-bounded inferences by conditioning utility tables on these instantiations. Once instantiations are exhausted, the cluster root propagates the resulted utility table to its parent.

However, a key limitation in MB-DPOP is the inability of exploiting the structure of a problem. As a result, MB-DPOP suffers from a severe redundancy in memory-bounded inference. First, since each cluster root enumerates for all its cycle-cut nodes without considering the independence of cycle-cut nodes in different branches, each branch in a cluster would have to perform redundant inferences when there are cycle-cut nodes which have nothing to do with the branch. Also, these nonconcurrent instantiations severely degenerate the parallelism among branches. Second, agents in a cluster use heuristics to determine cycle-cut nodes locally, which would results in a large number of cycle-cut nodes. Finally, MB-DPOP ignores the validity of the inference results and a branch has to perform inference even if the previous results are compatible with the current instantiations.

In this paper, we aim to improve the scalability of MB-DPOP by exploiting the structure of a problem. More specifically, our contributions are listed as follows.
\begin{itemize}
	\item By using the independence among the cycle-cut nodes in different branches, we propose a distributed enumeration mechanism where a cluster root only enumerates for cycle-cut nodes in its separators and these instantiations are augmented with branch-specific cycle-cut nodes dynamically along the propagation. Accordingly, each branch can perform memory-bounded inferences asynchronously and the number of nonconcurrent instantiations can be reduced significantly.
	\item We propose an iterative selection mechanism to determine cycle-cut nodes by taking both their effectiveness and positions in a pseudo tree into consideration. Concretely, rather than choosing highest/lowest separators as cycle-cut nodes, we tend to choose the nodes that cover a maximum of active nodes in a cluster and break ties according to their relative positions. Moreover, we propose a caching mechanism to exploit the historical inference results which are compatible with the current instantiations to further avoid unnecessary utility propagation.
	\item We theoretically show that the message number of our algorithm is no more than the one in MB-DPOP. Our empirical evaluation confirms the superiority of our algorithm over the state-of-the-art on various benchmarks.
\end{itemize}

The rest of this paper is organized as follows. Section 2 gives the background including DCOPs, pseudo tree, DPOP and MB-DPOP. In section 3, we give the motivation and describe the details of our proposed algorithm. And the experiments are shown in Section 4. Section 5 concludes this paper and gives future research work. 
        
\section{Background}
In this section, we introduce the preliminaries including DCOPs, pseudo tree, DPOP and MB-DPOP.

\subsection{Distributed Constraint Optimization Problems}
A distributed constraint optimization problem \cite{dpop2005} can be represented by a tuple $\langle A,X,D,F\rangle$ where
\begin{itemize}
	\item $A=\{a_1,\dots, a_n\}$ is a set of agents
	\item $X=\{x_1,\dots,x_m\}$ is a set of variables
	\item $D=\{D_1,\dots, D_m\}$ is a set of domains that are finite and discrete, each variable $x_i$ taking a value assignment from $D_i$
	\item $F=\{f_1,\dots,f_q\}$ is a set of constraint functions, each function $f_i:D_{i1}\times\dots\times D_{ik}\rightarrow\mathbb{R}_{\ge 0}$ denoting the non-negative cost for each assignment combination of $x_{i1},\dots,x_{ik}$.
\end{itemize}

For the sake of simplicity, we assume that each agent controls a variable and all constraint functions are binary (i.e., $f_{ij}: D_i\times D_j\rightarrow \mathbb{R}_{\ge 0}$). Here, the term "agent" and "variable" can be used interchangeably. A solution to a DCOP is an assignment including all variables that makes the minimum cost. That is
\begin{displaymath}
    X^*=\mathop{\arg\min}_{d_i\in D_i, d_j\in D_j}\sum_{f_{ij}\in F}f_{ij}(x_i=d_i,x_j=d_j)
\end{displaymath}

A DCOP can be visualized by a constraint graph presented as Fig. 1, where the nodes represent the agents and the edges represent the constraints, respectively.
\begin{figure}
	\centering
	\includegraphics[scale=0.4]{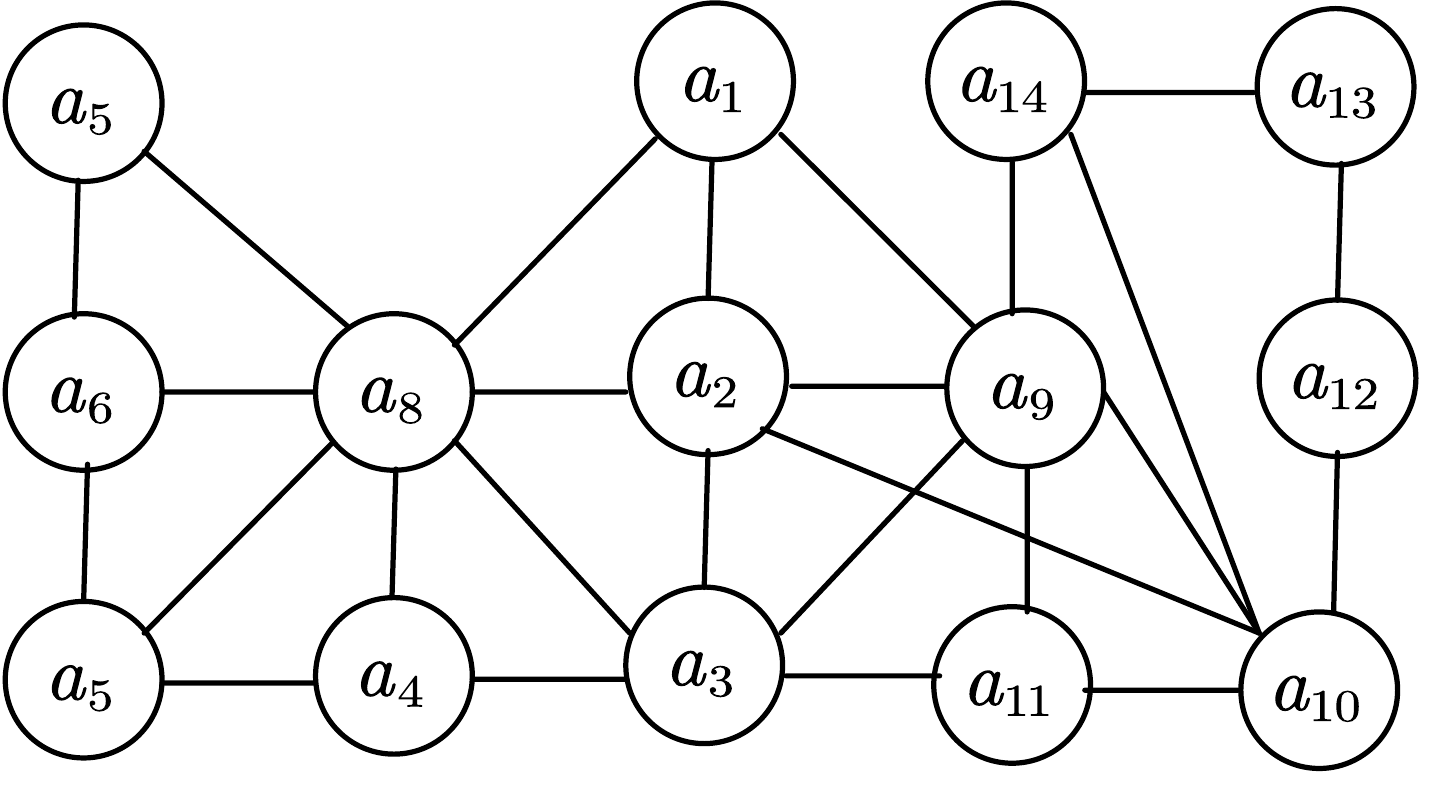}
	\caption{The constraint graph of a DCOP}
\end{figure}

\begin{figure}
	\centering
	\includegraphics[scale=0.45]{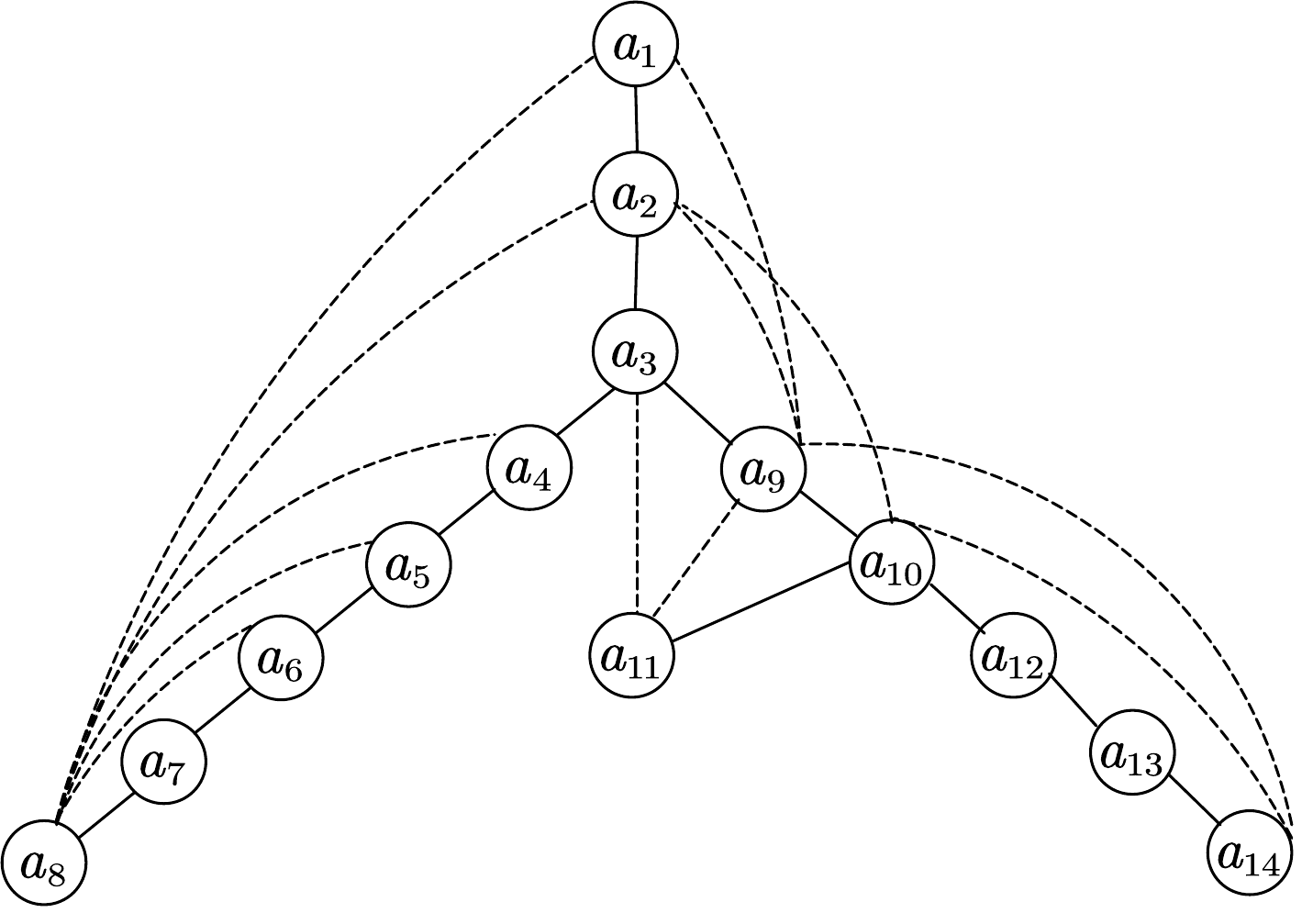}
	\caption{A pseudo tree derived from Figure 1}
\end{figure}

\subsection{Pseudo Tree}
A pseudo tree \cite{freuderQ85} is a partial ordering among agents and can be generated by depth-first search (DFS) traversal to a constraint graph, where different branches are independent from each other. Given a constraint graph and its spanning tree, the edges in the spanning tree are tree edges and the other edges are pseudo edges (i.e., non-tree edges). According to the relative positions in a pseudo tree, the neighbors of an agent $a_i$ connected by tree edges are categorized into parent $P(a_i)$ and children $C(a_i)$, while the ones connected by pseudo edges are denoted as pseudo parents $PP(a_i)$ and pseudo children $PC(a_i)$. For its parent and pseudo parents, we denote them as $AP(a_i)=\{P(a_i)\}\cup PP(a_i)$. We also denote its descendants as $Desc(a_i)$. Finally, the separators $Sep(a_i)$ \cite{odpop2006} of $a_i$ refer to the ancestors which are constrained with $a_i$ or its descendants. Fig. 2 gives a pseudo tree derived from the constraint graph in Fig. 1, where tree edges and pseudo edges are denoted by solid and dotted lines, respectively.

\subsection{DPOP}
DPOP \cite{dpop2005} is an inference-based complete algorithm for DCOPs, which implements the bucket elimination in a distributed manner \cite{dechter1999bucket}. It performs two phases of propagation on a pseudo tree via tree edges: a UTIL propagation phase to eliminate dimensions from the bottom up, and a VALUE propagation phase to assign the optimal value for each variable vice versa along the pseudo tree. More specifically, in a UTIL propagation phase, an agent $a_i$ collects the utility tables from its children and joins them with the local utility table, then computes the optimal utilities for all possible assignments of $Sep(a_i)$ to eliminate its dimension from the joint utility table. Then, it sends the projected utility table to its parent. In a VALUE propagation phase, once $a_i$ receives the assignments from its parent, it plugs them into the joint utility table obtained in the UTIL propagation phase to get the optimal assignment, and sends the joint assignment to its children. Although DPOP only requires a linear number of messages to solve a DCOP, its space complexity is exponential in the induced width of the pseudo tree. 

\subsection{MB-DPOP}
MB-DPOP \cite{mbdpop2007} attempts to improve the scalability of DPOP by trading the message number for smaller memory consumption. Given the dimension limit $k$, MB-DPOP starts with a labeling phase to identify the areas with the induced width \cite{cyclecut2003} higher than $k$ (i.e., clusters) and the corresponding cycle-cut (CC) nodes. Each cluster is bounded at the top by the lowest node in the tree that has separators of size $k$ or less, and such node is called the cluster root (CR) node.  
For each clusters, CC nodes are determined such that the cluster has the width no greater than $k$ once they are removed. 
	
In more detail, the CC nodes are selected and then aggregated in a bottom-up fashion. 
That is, given the lists of CC nodes selected by its children, $a_i$ first determines whether its width exceeds $k$ if $|Sep(a_i)|>k$. If it is the case, $a_i$ needs to choose additional CC nodes to enforce the memory limit $k$ by a heuristic function. Then, $a_i$ propagates all the CC nodes $CClist_i$ to its parent $P(a_i)$. Otherwise, if $|Sep(a_i)|\le k$ and the lists received from children are all empty, $a_i$ labels self as a normal node and propagates the utility as in DPOP. 

During the UTIL propagation phase normal nodes (i.e., the nodes whose width is no greater than $k$) perform canonical utility propagation while the other nodes in each cluster perform memory-bounded inferences. Specifically, each cluster root (CR) enumerates instantiations for its CC nodes and propagates them iteratively to the other nodes in the cluster, and these nodes perform memory-bounded inferences by conditioning the utility tables on the received instantiation. The cluster root eliminates its dimension and propagates the utility table to its parent after exhausting all the combinations. Finally, a VALUE propagation phase starts. Different from DPOP which only requires a round of value propagation, MB-DPOP requires additional utility propagation to re-drive the utilities corresponding to the assignments of CC nodes to get the optimal values, since non-CC nodes in a cluster only cache the utility table for the latest instantiation of CC nodes. 

\section{Proposed Method}
In this section, we present our proposed RMB-DPOP. We begin with a motivation, and then present the details and the theoretical claim of our algorithm, respectively.
\subsection{Motivation}0 
As stated earlier, MB-DPOP suffers from plenty of redundancies in memory-bounded inference due to the inability of exploiting a problem structure in both instantiation enumeration and the selection of CC nodes. Consider the problem in Fig. 2, where $a_3$ is the only cluster root with the dimension limit $k=2$. Since each agent in the cluster selects its CC nodes only with the local knowledge, MB-DPOP would select a large number of CC nodes and significantly increase the number of instantiations. In fact, if we choose CC nodes with the highest level, the CC nodes of $a_3$ are $CClist_{3} = \{a_1, a_2, a_3, a_4, a_5, a_9\}$. It would be worse when using the lowest heuristic which results in 9 CC nodes in this case. Alternatively, instead of choosing both $a_3$ and $a_9$, we could only choose $a_9$ and still guarantee the memory budget. Besides, the cluster root $a_3$ has to enumerate all instantiations of $CClist_{3}$, which results in a large number of nonconcurrent instantiations and redundant inferences. In fact, we could exploit the independence between branch $a_4$ and branch $a_9$ by generating instantiations that only contains the common CC nodes (i.e., $a_1, a_2$ and $a_3$). In this way, branch $a_4$ and branch $a_9$ can operate asynchronously and the number of non-concurrent instantiations is significantly reduced. In addition, all the bounded inference results are disposable in MB-DPOP, which also leads to redundant inferences. In fact, some inference results received from children in the previous iterations are compatible with the current instantiation, since each branch performs memory-bounded inference by conditioning only on a subset of all cycle-cut nodes of a cluster. Thus, it is unnecessary to perform a memory-bounded inference when the assignments of corresponding CC nodes do not change.

Therefore, to take the structure of a problem into consideration, we propose a novel algorithm named RMB-DPOP which incorporates a distributed enumeration mechanism to reduce the nonconcurrent instantiations, an iterative selection mechanism to reduce the number of CC nodes and a caching mechanism to avoid unnecessary inferences. Algorithm 1 \footnote{We omit the details of the value propagation phase due to its similarity to the one in MB-DPOP. The source code is available in https://github.com/czy920/RMB-DPOP. } presents the sketch of RMB-DPOP.

\subsection{Distributed Enumeration Mechanism}
\begin{figure}
	\centering
	\includegraphics[scale=1.0]{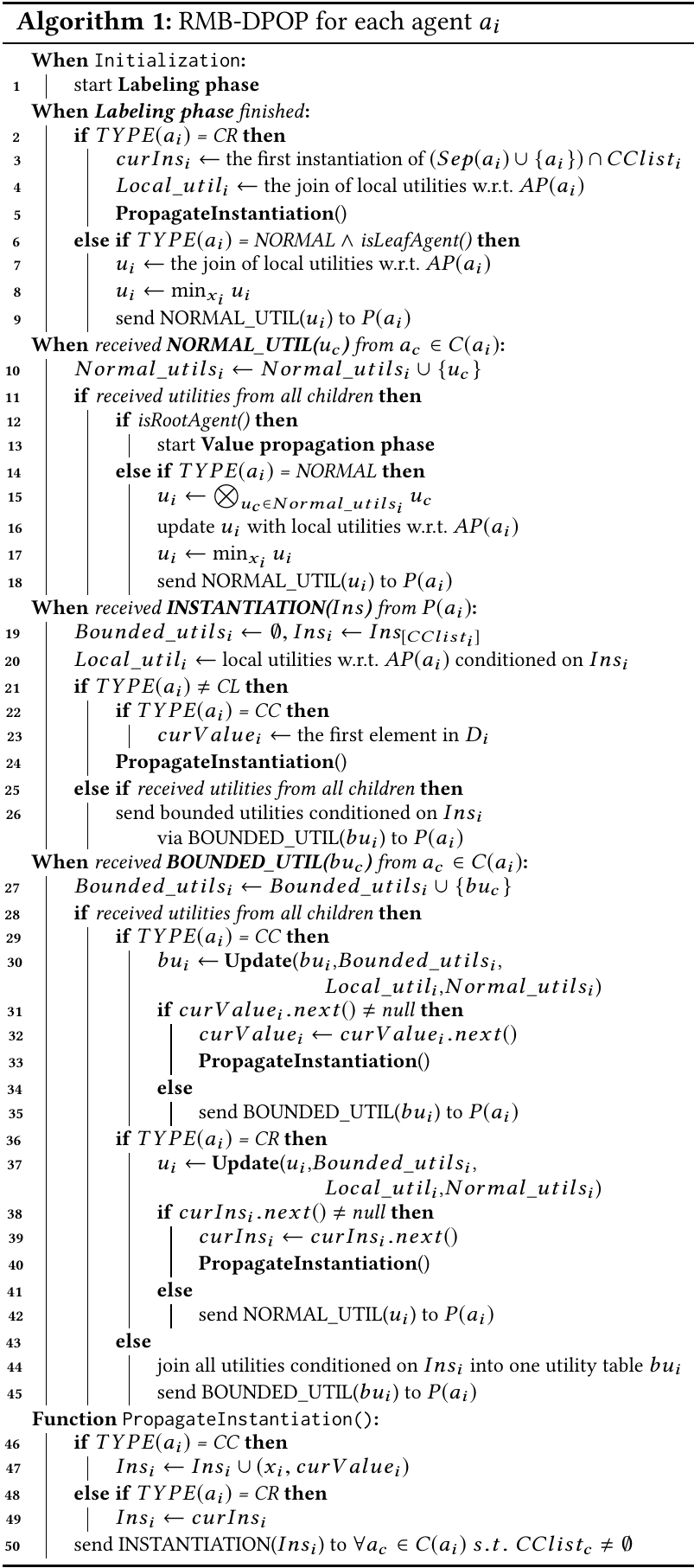}
\end{figure}
Distributed enumeration mechanism (DEM) is adopted in each cluster to perform asynchronous memory-bounded inference by factorizing the instantiations. More specifically, since each branch in a pseudo tree is independent, each CC node inside a cluster is only related to a subproblem. Hence, instead of enumerating all the instantiations of CC nodes by a cluster root, we only generate instantiations for the CC nodes in the separators of the cluster root and dynamically augment these instantiations with branch-specific CC nodes. In the following, we present the details of the mechanism.

When the Labeling phase finishes, a CR node $a_j$ starts the iterative memory-bounded UTIL propagation by instantiating the nodes in $(Sep(a_j)\cup \{a_j\}) \cap CClist_j$ (line 2-5), where the $CClist_j$ is a list of the CC nodes corresponding to the branch of $a_j$. When a CC node $a_i$ receives an instantiation $Ins_i$ from its parent, it augments $Ins_i$ by the first assignment from its domain and propagates the extended instantiation to its children in the cluster (line 22-24, 46-50). Once $a_i$ receives all the utilities from its children, it updates the cache and replaces self assignment with the next value in its domain, and then propagates the new instantiation (line 29-33). Until getting a complete bounded inference result corresponding to $Ins_i$ by a traversal of its domain, $a_i$ sends the result to its parent via a BOUNDED\_UTIL message (line 35).

Next, we theoretically show its superiority over MB-DPOP in terms of the message number. Let us first introduce two notations. For a cluster root $a_j$, we denote $CClist_j^{out} = (Sep(a_j)\cup \{a_j\}) \cap CClist_j$ as the set of CC nodes enumerated by $a_j$, and the remaining CC nodes as $CClist_j^{in} = CClist_j\backslash CClist_j^{out}$. 

\begin{lemma}
	For an agent $a_i$ in a cluster where $a_j$ is the CR node, the number of instantiations it receives is exponential in the size of $(CClist_i \cap Sep(a_i)) \cup CClist_j^{out}$.
\end{lemma}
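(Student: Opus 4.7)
The plan is to trace the propagation of instantiations from the cluster root $a_j$ down to $a_i$ and characterize which CC nodes contribute a fresh coordinate to the instantiations arriving at $a_i$. Once this index set is shown to be exactly $(CClist_i \cap Sep(a_i)) \cup CClist_j^{out}$, the count of distinct instantiations is the product of the corresponding domain sizes, and therefore exponential in its cardinality.

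The first step is to invoke the enumeration loop in Algorithm~1 (lines 2--5): the CR $a_j$ iterates over the Cartesian product $\prod_{a_k \in CClist_j^{out}} D_k$, so each outgoing instantiation carries one assignment per node in $CClist_j^{out}$. This already produces $\prod_{a_k \in CClist_j^{out}} |D_k|$ distinct instantiations leaving $a_j$ before any further augmentation occurs.

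The second step is to examine the augmentation along the downward tree path from $a_j$ to $a_i$. By lines 22--24 and 46--50, whenever a CC node on that path receives an instantiation from its cluster parent, it extends the instantiation by iterating over its own domain and then forwards the extended instantiation to its children. The crucial observation is that only CC ancestors in $Sep(a_i)$ contribute coordinates visible to $a_i$: CC nodes outside $Sep(a_i)$ share no constraint with $a_i$ or its descendants and are therefore not retained by the separator-aware forwarding. Hence the set of augmenting CC ancestors visible to $a_i$ is precisely $CClist_i \cap Sep(a_i)$, which is disjoint from $CClist_j^{out}$ since the former sits strictly below $a_j$ while the latter lies at or above $a_j$.

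Multiplying these two independent contributions yields $\prod_{a_k \in (CClist_i \cap Sep(a_i)) \cup CClist_j^{out}} |D_k|$ distinct instantiations received by $a_i$, which is exponential in $|(CClist_i \cap Sep(a_i)) \cup CClist_j^{out}|$. The main obstacle is formalizing the filtering argument: rigorously justifying that CC nodes traversed en route to $a_i$ but lying outside $Sep(a_i)$ are actually stripped before the instantiation is processed by $a_i$. This will require a careful reading of the separator-aware propagation in the protocol; should no such stripping take place in some edge cases, the claim must be read as an upper bound, which still supplies the exponential estimate asserted by the lemma.
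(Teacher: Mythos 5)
Your proof follows essentially the same route as the paper's: the cluster root contributes $d^{|CClist_j^{out}|}$ nonconcurrent instantiations (lines 2--5), the CC nodes on the tree path from $a_j$ to $a_i$ augment each of them with their own assignments (lines 23, 32), and the count is the product over the resulting coordinate set. Two remarks. First, your claim that $CClist_i \cap Sep(a_i)$ is disjoint from $CClist_j^{out}$ because it ``sits strictly below $a_j$'' is false: $Sep(a_i)$ can and typically does contain ancestors of $a_j$ (in the running example, $a_2 \in CClist_8 \cap Sep(a_8)$ while $CClist_3^{out}=\{a_2\}$). This does not damage your final count because you take the product over the union rather than multiplying the two factors separately, but the sentence should be dropped. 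Second, the ``separator-aware forwarding'' that strips coordinates of CC nodes outside $Sep(a_i)$ is something you invented; the paper describes no such filtering (the projection onto $CClist_c$ appears only in the caching mechanism), and the paper's own proof silently identifies the CC nodes augmenting along the path with $(CClist_i \cap Sep(a_i)) \setminus CClist_j^{out}$ without justifying that identification. You have correctly located the one step that needs an argument --- namely that a CC ancestor of $a_i$ inside the cluster whose assignment reaches $a_i$ must lie in $CClist_i \cap Sep(a_i)$ --- and your fallback reading of the lemma as an upper bound is a reasonable way to close it; the paper simply asserts the equality.
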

\begin{proof}
According to line 2-5, the nonconcurrent instantiations sent from $a_j$ is exponential in $|CClist_j^{out}|$. Besides, each nonconcurrent instantiation is augmented by the CC nodes along the path from $a_j$ to $a_i$ (line 23, 32). Therefore, the number of instantiations $a_i$ receives is exponential in $|(CClist_i \cap Sep(a_i)) \cup CClist_j^{out}|$.
\end{proof}

\begin{theorem}
	Given the same CC lists of each node in each cluster, the maximal message number of RMB-DPOP is no more than the one in MB-DPOP.
\end{theorem}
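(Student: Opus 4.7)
The plan is to reduce the theorem to a per-edge comparison inside the clusters: outside the clusters both algorithms propagate exactly as DPOP and exchange the same linear number of messages, so it suffices to bound, cluster by cluster, the messages flowing along each tree edge of the cluster (the VALUE phase is structurally identical and can be handled the same way).

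Fix a cluster rooted at $a_j$. In MB-DPOP the cluster root enumerates every combination of assignments to the full list $CClist_j$ and ships one context downward (with a matching BOUNDED\_UTIL upward) per combination, so every edge inside the cluster carries a number of messages exponential in $|CClist_j|$. In RMB-DPOP, Lemma~1 tells us that the corresponding count along the edge into any node $a_i$ is exponential in $|(CClist_i\cap Sep(a_i))\cup CClist_j^{out}|$. The heart of the argument is therefore the set-theoretic inclusion
\[
(CClist_i\cap Sep(a_i))\cup CClist_j^{out} \;\subseteq\; CClist_j.
\]
The right summand is a subset of $CClist_j$ directly by its definition $(Sep(a_j)\cup\{a_j\})\cap CClist_j$. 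For the left summand I would invoke the bottom-up aggregation of CC lists from the Labeling phase: since $a_i$ lies in the cluster of $a_j$, the nodes selected by descendants of $a_i$ are merged into $CClist_{P(a_i)}, CClist_{P(P(a_i))}, \ldots$ and ultimately into $CClist_j$, which yields $CClist_i\subseteq CClist_j$ under the standing hypothesis that the two algorithms operate on the same CC lists; intersecting with $Sep(a_i)$ preserves the inclusion.

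Once the inclusion is in hand, the per-edge message count in RMB-DPOP is at most the per-edge count in MB-DPOP; summing over all tree edges of every cluster, and noting that the caching mechanism can only eliminate further messages, yields the desired inequality. I expect the main obstacle to be the careful justification of $CClist_i\subseteq CClist_j$ as an inclusion of node identifiers rather than an inequality of instantiation counts, which hinges on a precise reading of the aggregation rule in the Labeling phase and on the fact that the hypothesis of equal CC lists rules out any branch-specific divergence between the two algorithms.
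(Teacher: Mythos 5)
Your proposal is correct and follows essentially the same route as the paper: both reduce the theorem to comparing, for each agent $a_i$ in a cluster, the number of instantiations it receives ($d^{|CClist_j|}$ in MB-DPOP versus the count from Lemma~1 in RMB-DPOP) and then establish that $|(CClist_i\cap Sep(a_i))\cup CClist_j^{out}|\le |CClist_j|$. Your direct set-inclusion argument is in fact a slightly cleaner packaging of the paper's cardinality arithmetic, which implicitly relies on the same fact $CClist_i\cap Sep(a_i)\subseteq CClist_j$ that you justify explicitly via the bottom-up aggregation of CC lists.
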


\begin{proof}
	It is enough to show the theorem by analyzing the total number of instantiations received by each agent $a_i$ in a cluster, since $a_i$ must respond with a bounded utility table to its parent after receiving an instantiation. Without loss of generality, we assume that each variable has a domain with the same size $d$. In MB-DPOP each agent in a cluster will receive $d^{|CClist_j|}$ instantiations. Whereas from Lemma 1, we have the number of instantiations sent to $a_i$ as $d^{|(CClist_i \cap Sep(a_i)) \cup CClist_j^{out}|}$, and
	$$
	\small
	\begin{aligned}
	|CClist_j|
	&= |CClist_j^{out} \cup CClist_j^{in}|\\
	&= |CClist_j^{out}|+|CClist_j^{in}|\\
	&\geq |CClist_j^{out}|+|CClist_i\cap CClist_j^{in} \cap Sep(a_i)|\\
	&= |CClist_j^{out}\cup (CClist_i\cap CClist_j^{in} \cap Sep(a_i))|\\
	&= |(CClist_i \cap Sep(a_i)) \cup CClist_j^{out}| 
	\end{aligned}
	$$
	Consequently, RMB-DPOP propagates a smaller number of instantiations than MB-DPOP. And only when the cluster does not have the CC nodes inside the cluster (i.e., $CClist_j^{in} = \emptyset$), the instantiations for each agent in RMB-DPOP are equivalent to those in MB-DPOP. Thus, the theorem holds.
\end{proof}

\begin{figure}
	\centering
	\includegraphics[scale=1.0]{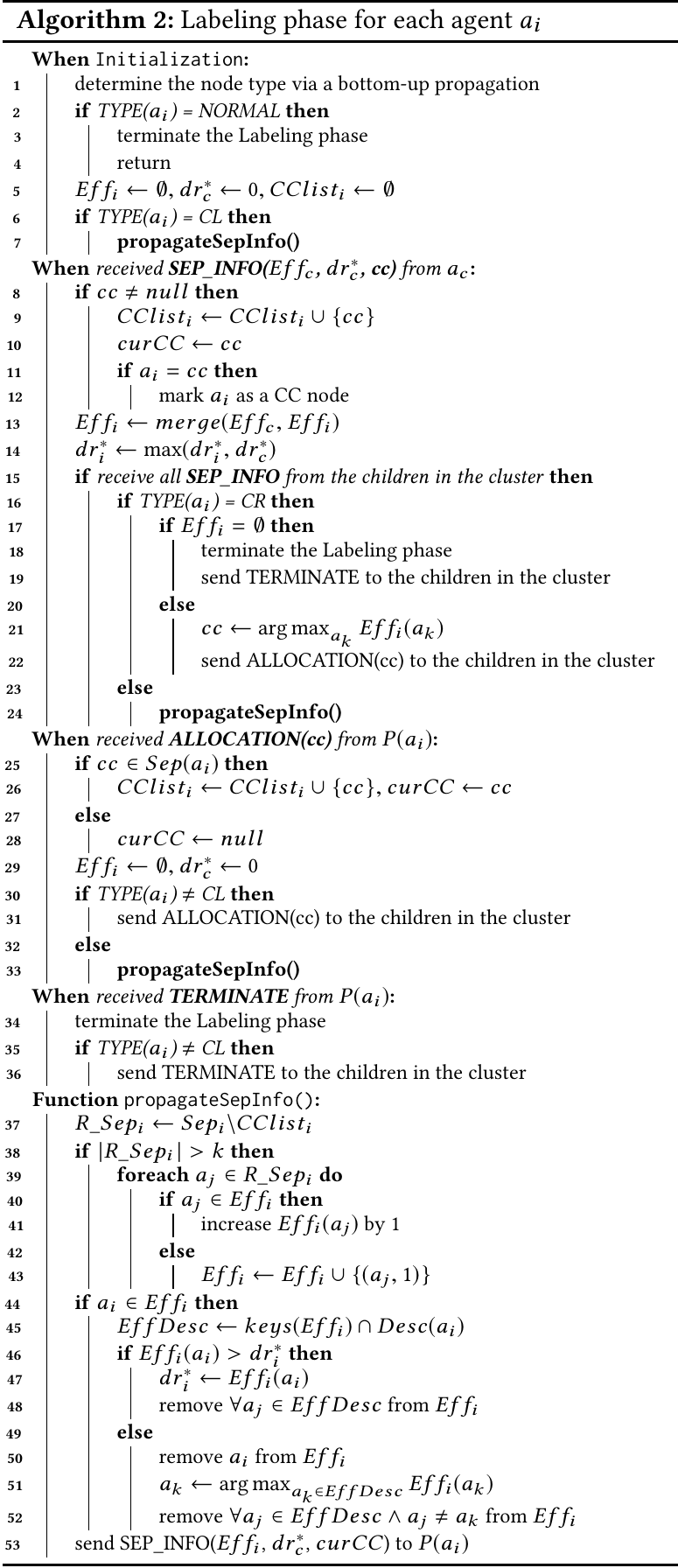}
\end{figure}

\subsection{Iterative Selection Mechanism}
Instead of selecting CC nodes based on the local knowledge in MB-DPOP which would result in a large number of CC nodes, we choose CC nodes by taking their effectiveness and their relative positions into consideration through an iterative selection mechanism (ISM). Specifically, in a cluster we measure the effectiveness of a node by the number of active nodes it covers. Here, an active node is the one whose width is still greater than $k$ given the selected CC nodes. Besides, to facilitate DEM, we tend to select nodes in different branches of a cluster. Therefore, we propose to break ties among the nodes with the same effectiveness by their positions in a pseudo tree. Algorithm 2 gives the sketch of Labeling phase. 

In more detail, a CC node is selected through two phases of message-passing. In the first phase, the effectiveness of each CC node candidate is aggregated in a bottom-up fashion via SEP\_INFO messages. Specifically, each agent $a _i$ maintains a data structure $Eff_i$ to record the effectiveness of candidates. When receiving a SEP\_INFO message from a child $a_c$, it updates $Eff_i$ by $Eff_c$ according to

\begin{displaymath}
	\small
	Eff_i\left( a_k \right) =\begin{cases}
		Eff_c\left( a_k \right) ,a_k\notin keys\left( Eff_i \right)\\
		Eff_i\left( a_k \right) + Eff_c\left( a_k \right) ,otherwise\\
	\end{cases},\forall a_k\in keys\left( Eff_c \right) 
\end{displaymath}

If $a_i$ is an active node (i.e., satisfying line 38), for each CC node candidate $a_k\in R\_Sep_i$ it increases the effectiveness $Eff_i(a_k)$ by 1 (line 39-43). Then $a_i$ removes all the CC candidates that have a suboptimal effectiveness in its descendants from $Eff_i$ (line 45-52), since they cannot produce the highest effectiveness. The phase ends when the cluster root $a_j$ receives all the SEP\_INFO messages from the children in the cluster.

In the second phase, the cluster root $a_j$ chooses the CC node with the maximal effectiveness (line 21) and propagates it into the cluster via ALLOCATION messages. According to Lemma 3.1 and Theorem 3.2, our algorithm can take the advantage of the CC nodes inside the cluster through the DEM. Therefore, we propose to break ties according to the height of the candidates when choosing a CC node, i.e., we tend to choose the lowest CC node since it is more likely to be inside the cluster. The phase ends after each cluster leaf (CL) starts a new phase of effectiveness propagation (line 33). The Labeling phase terminates when there is no active nodes (i.e., satisfying line 17).

It is worth noting that our selection mechanism only incurs minor messages. Specifically, to determine a CC node in the cluster with CR $a_j$, agents need to propagate bottom-up SEP\_INFO messages and top-down ALLOCATION messages via tree edges, which requires $\mathcal{O} (m)$ messages. Here, $m$ is the total number of nodes in the cluster. Thus, the total messages exchanged in the Labeling phase in a cluster is $\mathcal{O} (|CClist_j|*m)$ and the overall complexity is $\mathcal{O} (N^2)$ where $N$ is the total number of agents.
\begin{table}
	\small
	\caption{The first round of effectiveness aggregation}
	\begin{tabular}{|c|c|c|}
		\hline
		\textbf{step}      & \textbf{message}           & \textbf{SEP\_INFO}                                                        \\ \hline
		\multirow{3}{*}{1} & $a_8\rightarrow a_7$       & $Eff_{8} = \{(a_1, 1), (a_2, 1), (a_4, 1), (a_5, 1), (a_6, 1),(a_7, 1)\}$ \\ \cline{2-3} 
		& $a_{14}\rightarrow a_{13}$ & $Eff_{14} = \{(a_9, 1), (a_{10}, 1), (a_{13}, 1)\}$                       \\ \cline{2-3} 
		& $a_{11}\rightarrow a_{10}$ & $Eff_{11} = \{(a_3, 1), (a_9, 1), (a_{10}, 1)\}$                          \\ \hline
		\multirow{2}{*}{2} & $a_7\rightarrow a_6$       & $Eff_{7} =\{(a_1, 2), (a_2, 2), (a_4, 2), (a_5, 2), (a_6, 2),(a_7,1)\}$   \\ \cline{2-3} 
		& $a_{13}\rightarrow a_{12}$ & $Eff_{13}=\{(a_9, 2), (a_{10}, 2), (a_{12}, 1), (a_{13}, 1)\}$            \\ \hline
		\multirow{2}{*}{3} & $a_6\rightarrow a_5$       & $Eff_{6} =\{(a_1, 3), (a_2, 3), (a_4, 3), (a_5, 3), (a_6, 2)\}$           \\ \cline{2-3} 
		& $a_{12}\rightarrow a_{10}$ & $Eff_{12}=\{(a_9, 3), (a_{10}, 3), (a_{13}, 1)\}$                         \\ \hline
		\multirow{2}{*}{4} & $a_5\rightarrow a_4$       & $Eff_{5} =\{(a_1, 4), (a_2, 4), (a_4, 4), (a_5, 3)\}$                     \\ \cline{2-3} 
		& $a_{10}\rightarrow a_9$    & $Eff_{10}=\{(a_2,1),(a_3, 2),(a_9, 5), (a_{10}, 4)\}$                     \\ \hline
		\multirow{2}{*}{5} & $a_4\rightarrow a_3$       & $Eff_{4} =\{(a_1,5),(a_2,5),(a_3,1),(a_4,4)\}$                            \\ \cline{2-3} 
		& $a_9\rightarrow a_3$       & $Eff_{9}=\{(a_1,1),(a_2,2),(a_3, 3),(a_9, 5)\}$                           \\ \hline
		6                  & N/A                        & $Eff_{3}=\{(a_1,6),(a_2,7),(a_3, 4),(a_4,4),(a_9, 5)\}$                   \\ \hline
		
	\end{tabular}
	\label{effectiveness aggregation}
\end{table}
\begin{figure*}	
	\centering
	\subfloat[Number of Messages]{
		\includegraphics[scale=0.3]{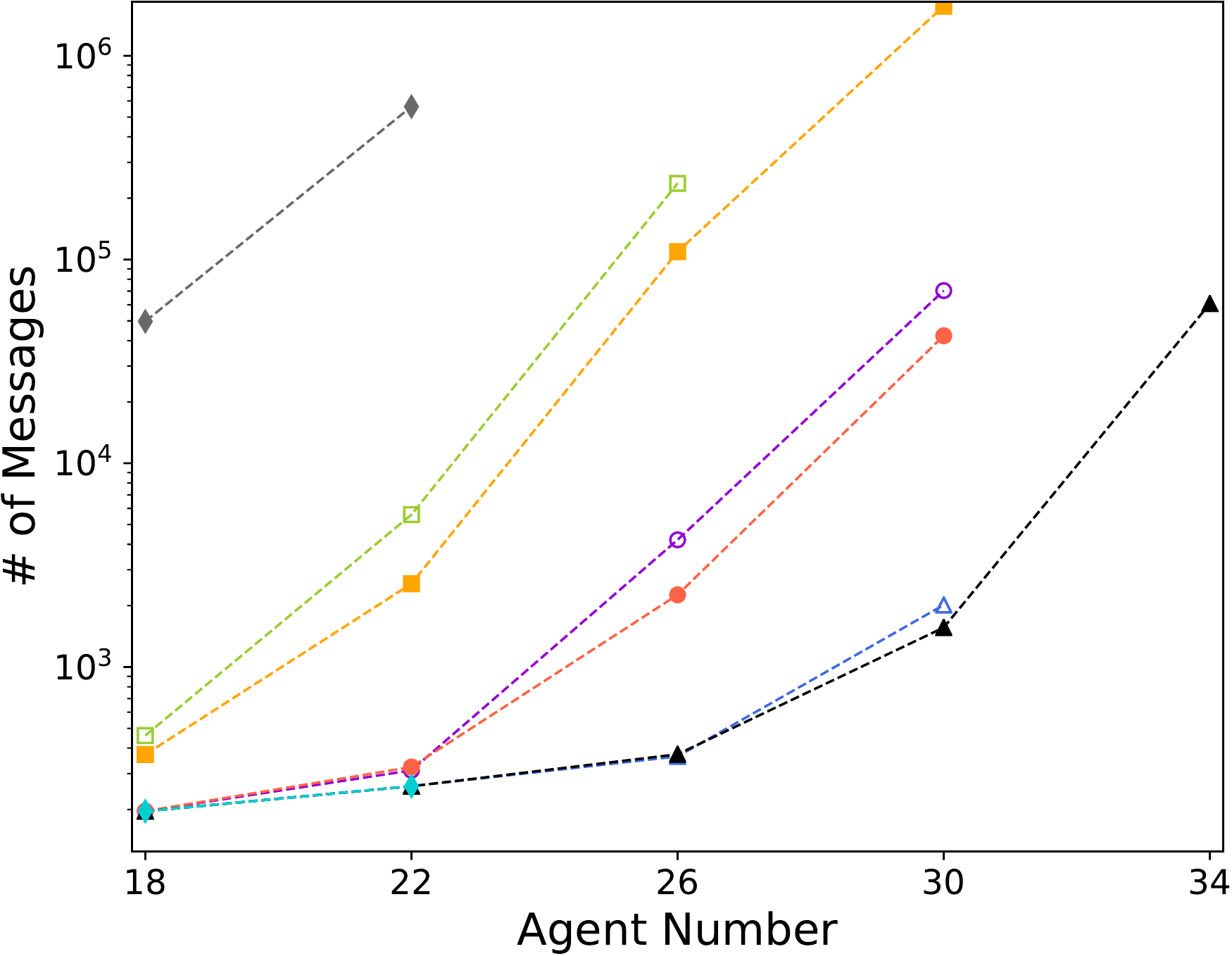} 
	}
	\subfloat[Network Load]{ 
		\includegraphics[scale=0.3]{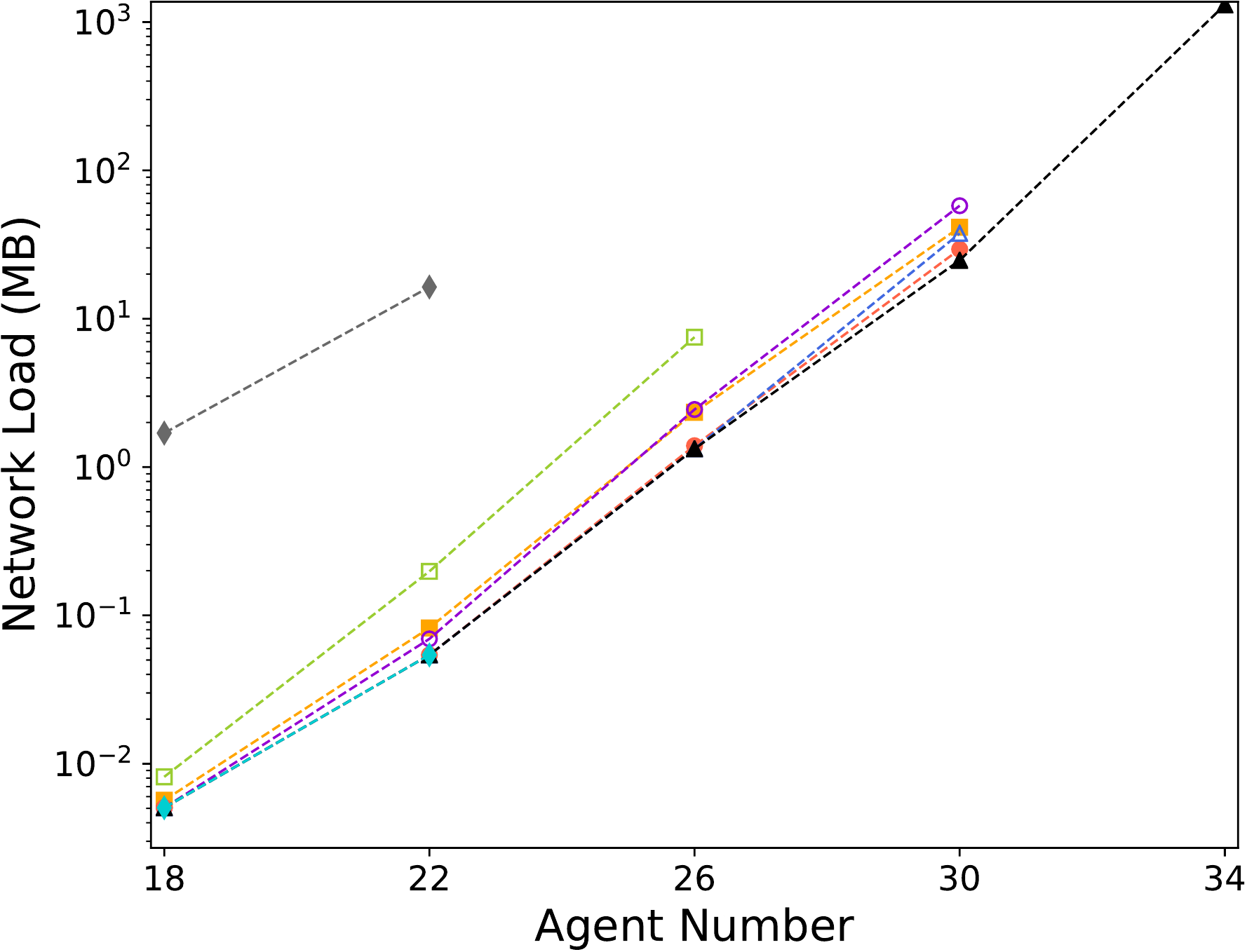} 
	}
	\subfloat[Runtime]{
		\includegraphics[scale=0.3]{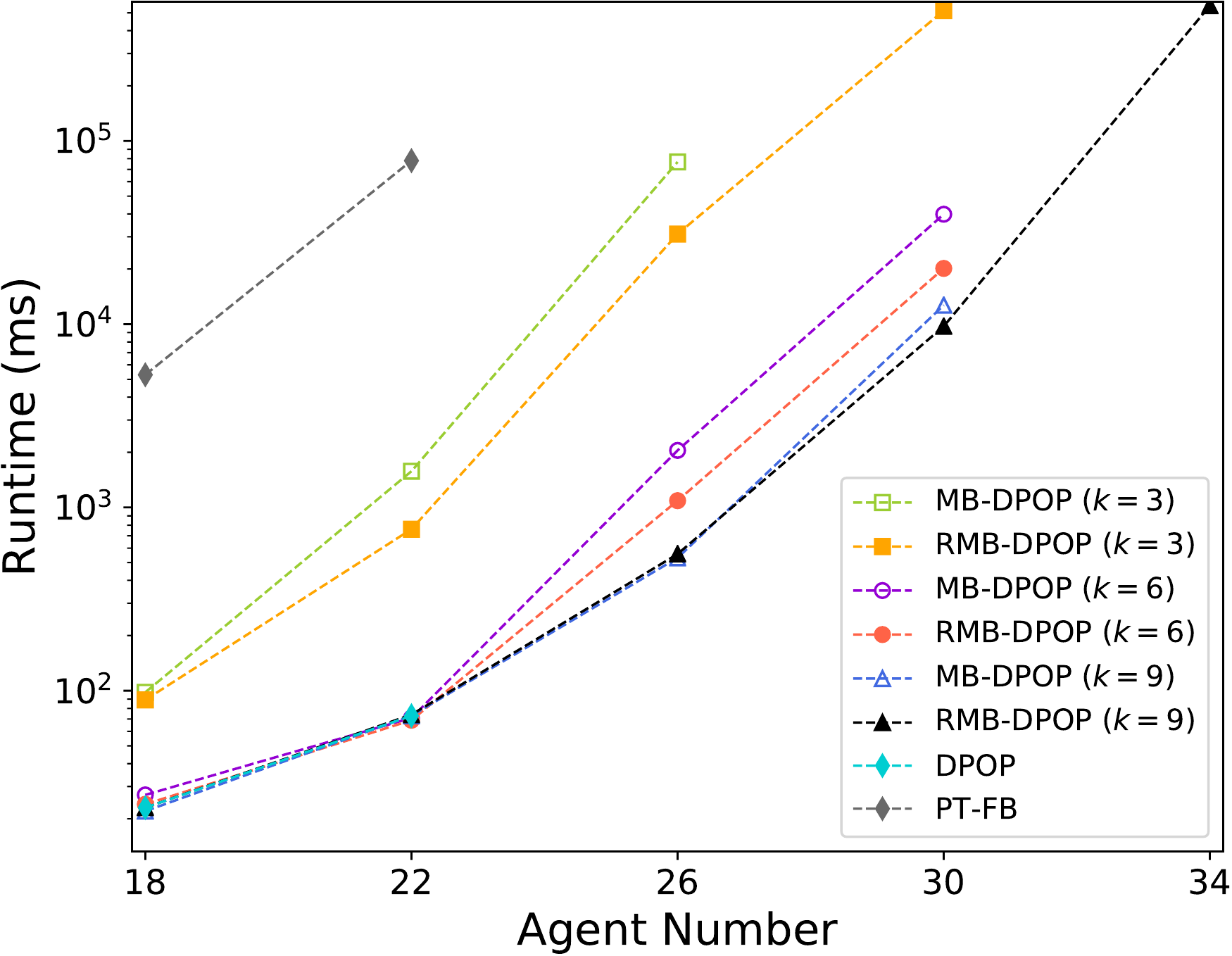} 
	}
	\\
	\caption{Performance comparison under different agent numbers}
	\label{agent}
\end{figure*}

\subsection{Caching Mechanism}
The caching mechanism attempts to reduce unnecessary inferences by exploiting the historical results when they are compatible with the current instantiations. To do this, before $a_i$ propagates an instantiation to a child $a_c$, it projects the instantiation on $CClist_c$ and stores the projected one. When $a_i$ receives a new instantiation, for each child it checks whether the instantiation is compatible with the cached one associated with the child. If it is the case, the results cached in the previous iteration is valid and there is no need to perform a memory-bounded inference. Otherwise, the results from the child is no longer valid and $a_i$ propagates the (augmented) instantiation to the child to initiate a new memory-bounded inference.

\subsection{Execution Example}
For better understanding, we take Fig. 2 as an example to illustrate our algorithm. Assuming the dimension limit $k=2$, there is only a cluster whose CR node is $a_3$. The labeling phase begins with CL nodes $a_8$ and $a_{14}$ which send SEP\_INFO messages to their parents and Table \ref{effectiveness aggregation} presents the trace of effectiveness aggregation in the first round in a chronological order.
\begin{figure*}
	\subfloat[Number of Messages]{
		\includegraphics[scale=0.3]{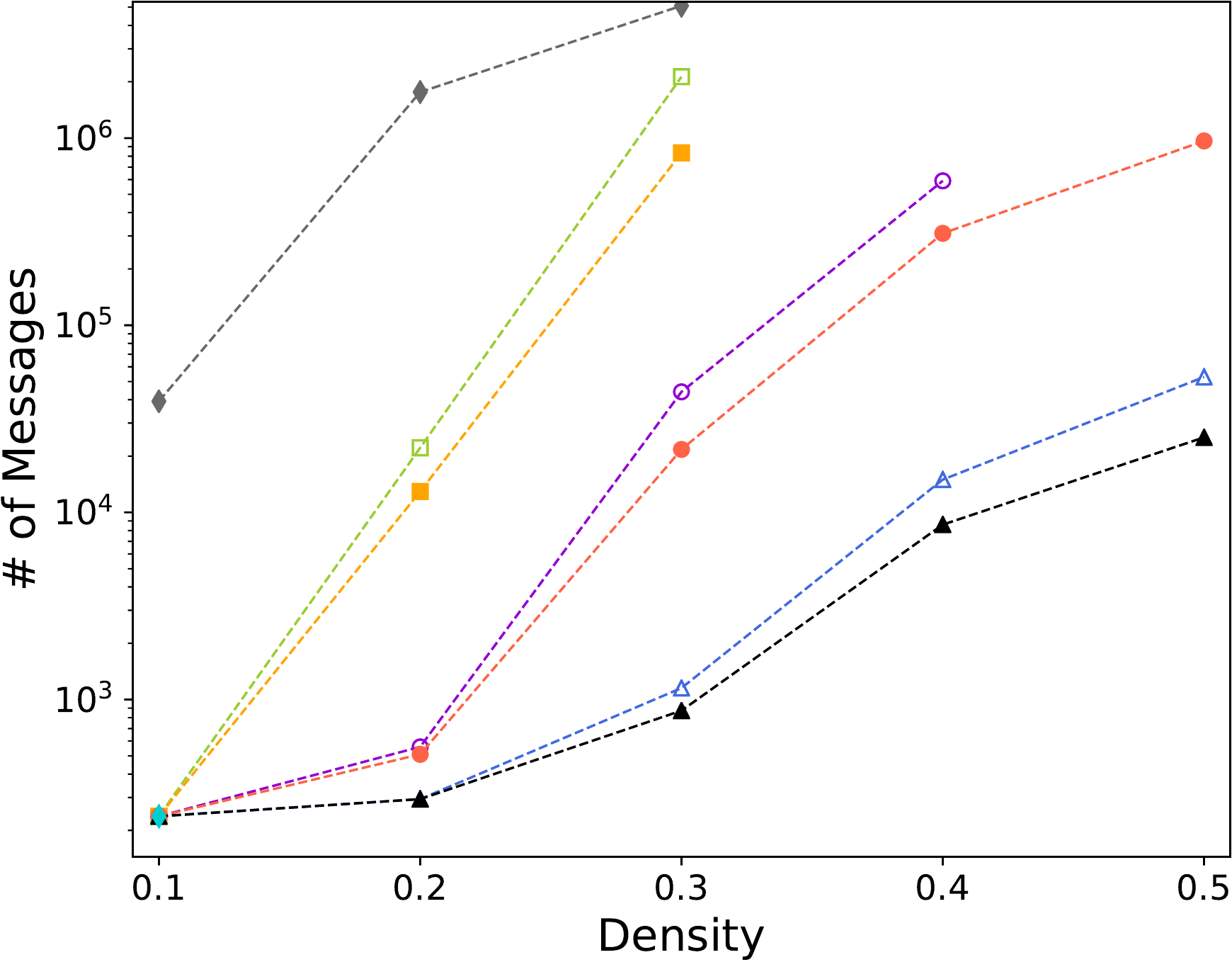} 
	}
	\subfloat[Network Load]{
		\includegraphics[scale=0.3]{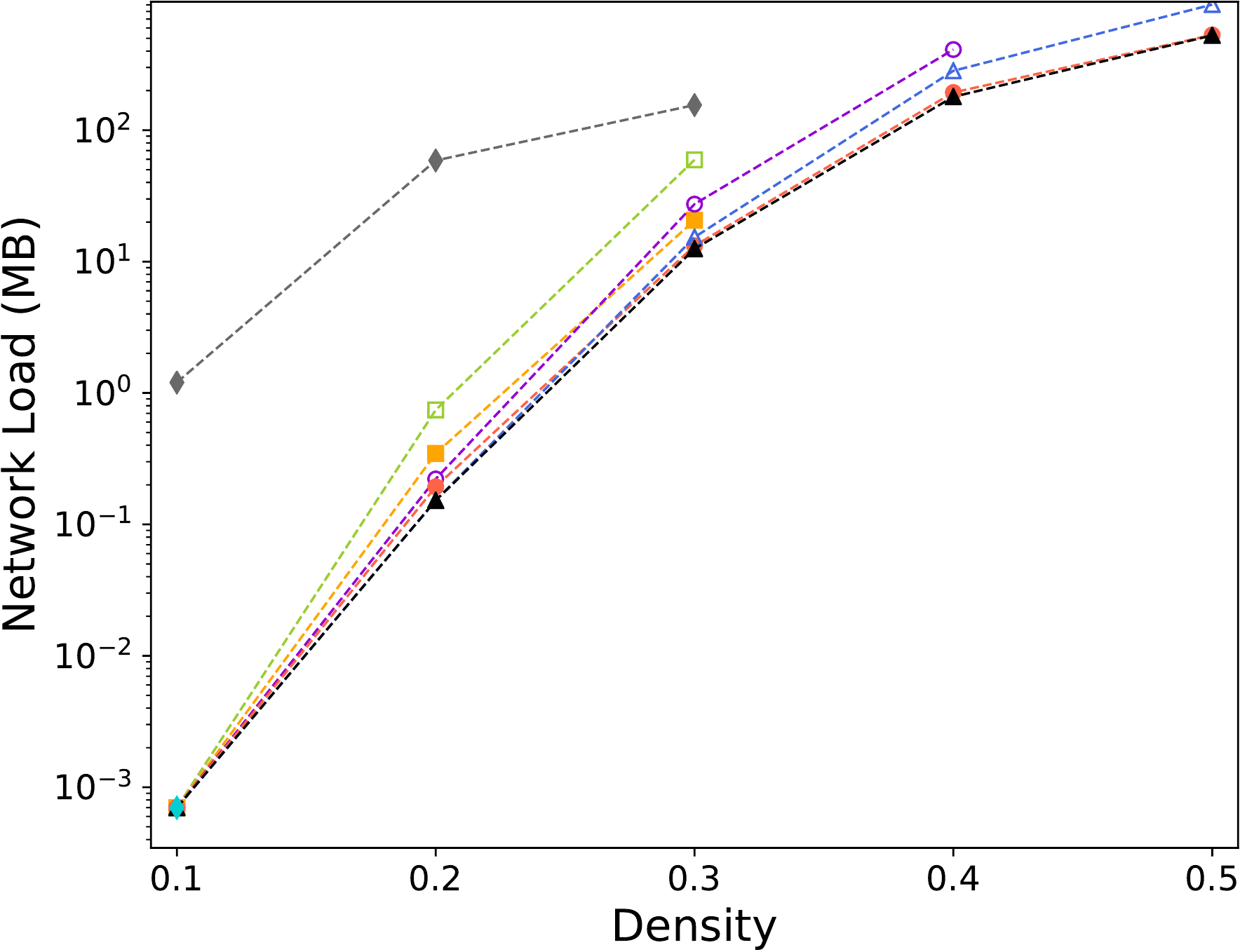} 
	}
	\subfloat[Runtime]{
		\includegraphics[scale=0.3]{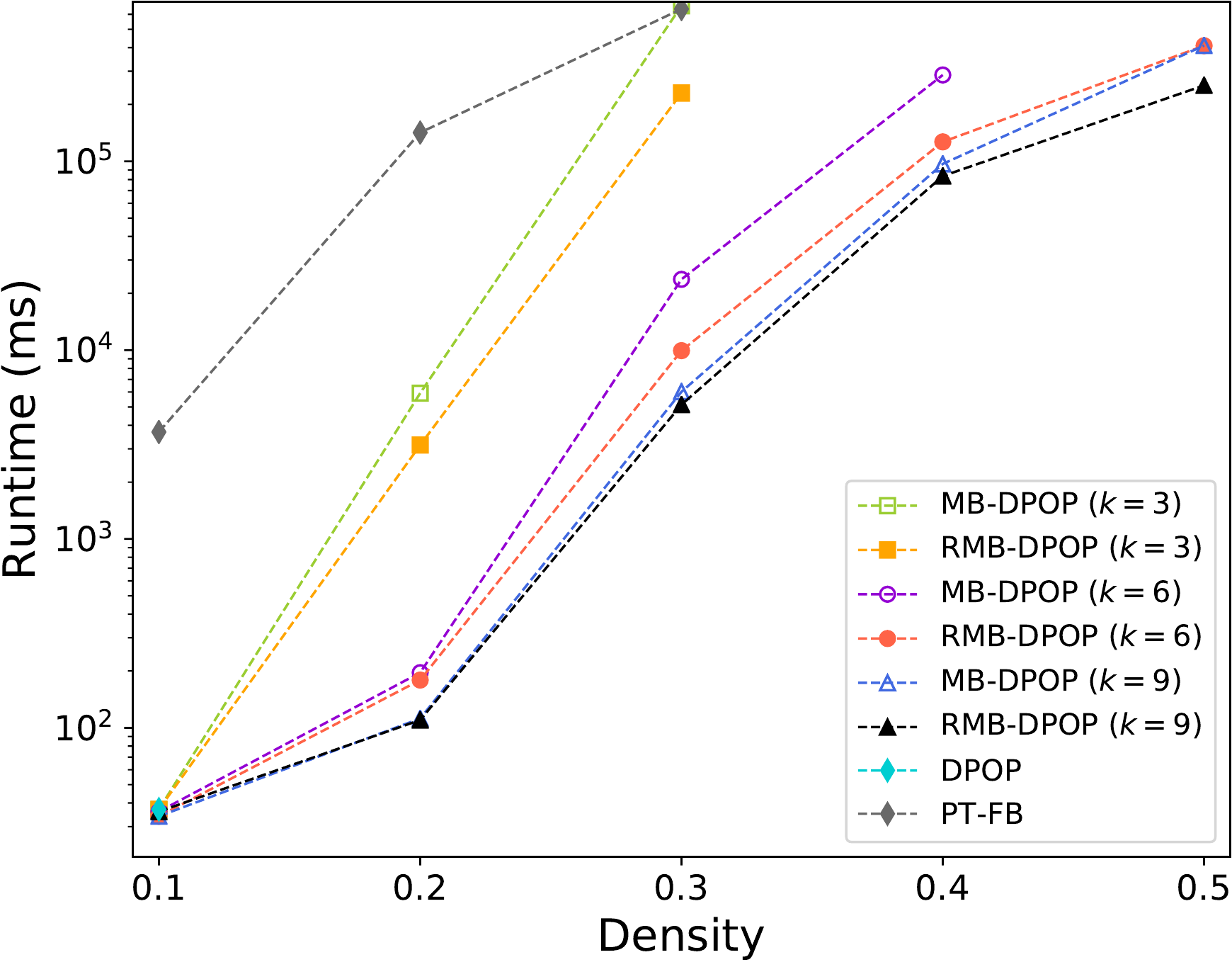} 
	}
	\caption{Performance comparison under different graph densities}
	\label{density}
\end{figure*}

\begin{table}
	\small
	\caption{The CClist of each node in the cluster}
	\begin{tabular}{|c|c|}
		\hline
		\textbf{$a_i$} & \textbf{$CClist_i$}       \\ \hline
		$a_3$          & $\{a_2,a_9,a_5,a_6,a_7\}$ \\ \hline
		$a_4$          & $\{a_2,a_5,a_6,a_7\}$     \\ \hline
		$a_5$          & $\{a_2,a_5,a_6,a_7\}$     \\ \hline
		$a_6$          & $\{a_2,a_5,a_6,a_7\}$     \\ \hline
		$a_7$          & $\{a_2,a_5,a_6,a_7\}$     \\ \hline
		$a_8$          & $\{a_2,a_5,a_6,a_7\}$     \\ \hline
		$a_9$          & $\{a_2,a_9\}$             \\ \hline
		$a_{10}$       & $\{a_2,a_9\}$             \\ \hline
		$a_{11}$       & $\{a_2,a_9\}$             \\ \hline
		$a_{12}$       & $\{a_2,a_9\}$             \\ \hline
		$a_{13}$       & $\{a_2,a_9\}$             \\ \hline
		$a_{14}$       & $\{a_2,a_9\}$             \\ \hline
	\end{tabular}
	\label{CClist}
\end{table}

It can be seen that node $a_2$ has the highest effectiveness and we should choose it as a CC node. Then a top-down phase is initiated to apply $a_i$ into the cluster. These two phases are performed alternatively until all the nodes in the cluster have a width less than $k$. The final CC nodes for each agent is listed as Table \ref{CClist}.

\begin{table}[]
	\small
	\caption{The first round of instantiation propagation}
	\begin{tabular}{|c|c|c|}
		\hline
		\textbf{step}      & \textbf{message}           & \textbf{INSTANTIATION}                 \\ \hline
		\multirow{2}{*}{1} & $a_3\rightarrow a_4$       & \{$a_2 = 0$\}                          \\ \cline{2-3} 
		& $a_3\rightarrow a_9$       & \{$a_2 = 0$\}                          \\ \hline
		\multirow{2}{*}{2} & $a_4\rightarrow a_5$       & \{$a_2 = 0$\}                          \\ \cline{2-3} 
		& $a_9\rightarrow a_{10}$    & \{$a_2 = 0, a_9 = 0$\}                 \\ \hline
		\multirow{3}{*}{3} & $a_5 \rightarrow a_6$      & \{$a_2 = 0, a_5 = 0$\}                 \\ \cline{2-3} 
		& $a_{10}\rightarrow a_{11}$ & \{$a_2 = 0, a_9 = 0$\}                 \\ \cline{2-3} 
		& $a_{10}\rightarrow a_{12}$ & \{$a_2 = 0, a_9 = 0$\}                 \\ \hline
		\multirow{2}{*}{4} & $a_6\rightarrow a_7$       & \{$a_2 = 0, a_5 = 0, a_6 = 0$\}        \\ \cline{2-3} 
		& $a_{12}\rightarrow a_{13}$ & \{$a_2 = 0, a_9 = 0$\}                 \\ \hline
		\multirow{2}{*}{5} & $a_7\rightarrow a_8$       & \{$a_2 = 0, a_5 = 0, a_6 = 0, a_7=0$\} \\ \cline{2-3} 
		& $a_{13}\rightarrow a_{14}$ & \{$a_2 = 0, a_9 = 0$\}                 \\ \hline
	\end{tabular}
	\label{instantiation}
\end{table}
Then, the DEM begins with $a_3$ which sends the first instantiation w.r.t. $CClist_3^{out}=\{a_2\}$ to its children $a_4$ and $a_9$. When receiving an instantiation, a CC node appends its assignment into the instantiation. Table \ref{instantiation} gives the trace of the first round of instantiation propagation.

\begin{figure}
\includegraphics[scale=0.3]{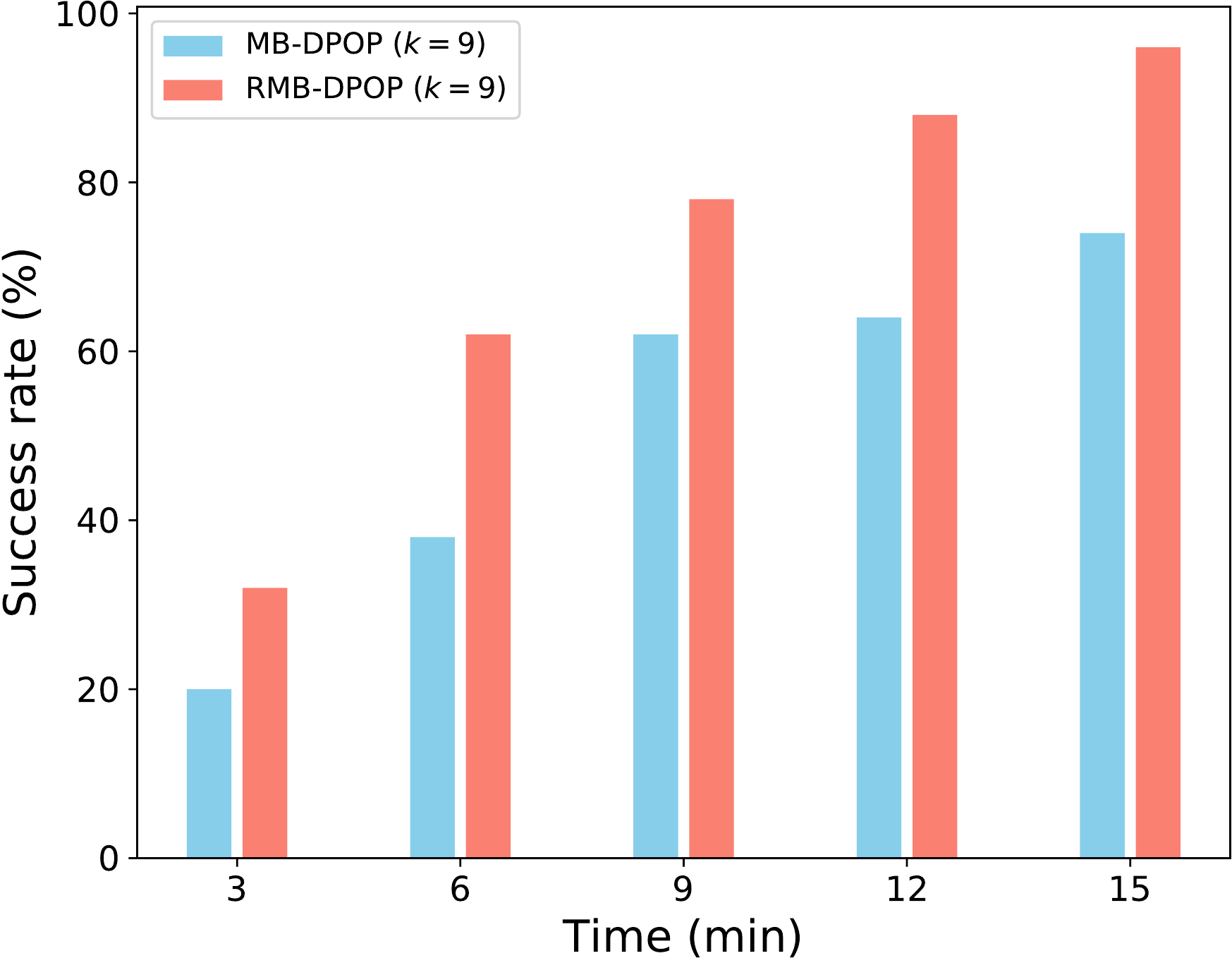}
\caption{Success rate within limited time}
\label{success}
\end{figure}

\begin{figure*}
	\subfloat[Number of Messages]{
		\includegraphics[scale=0.3]{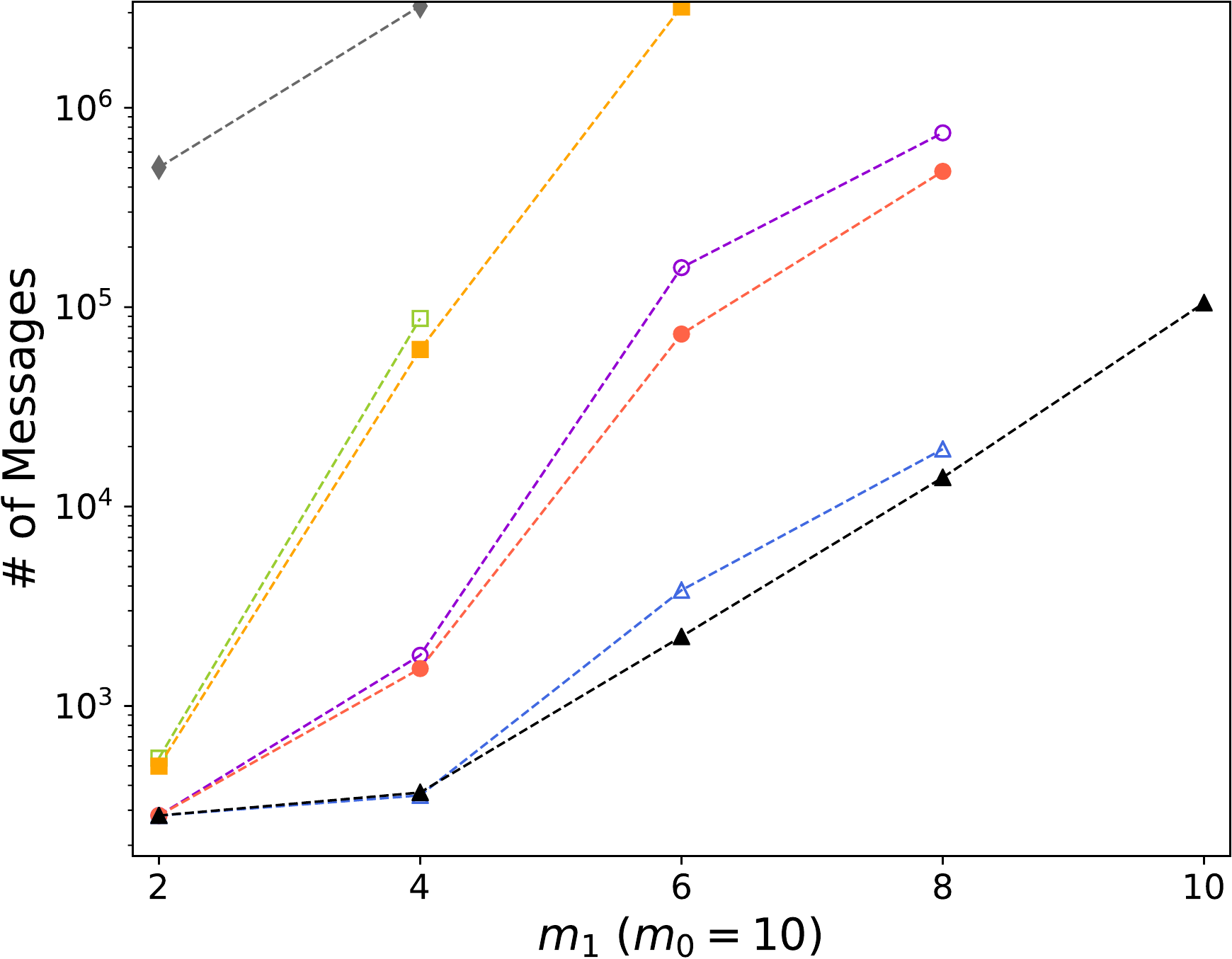} 
	}
	\subfloat[Network Load]{
		\includegraphics[scale=0.3]{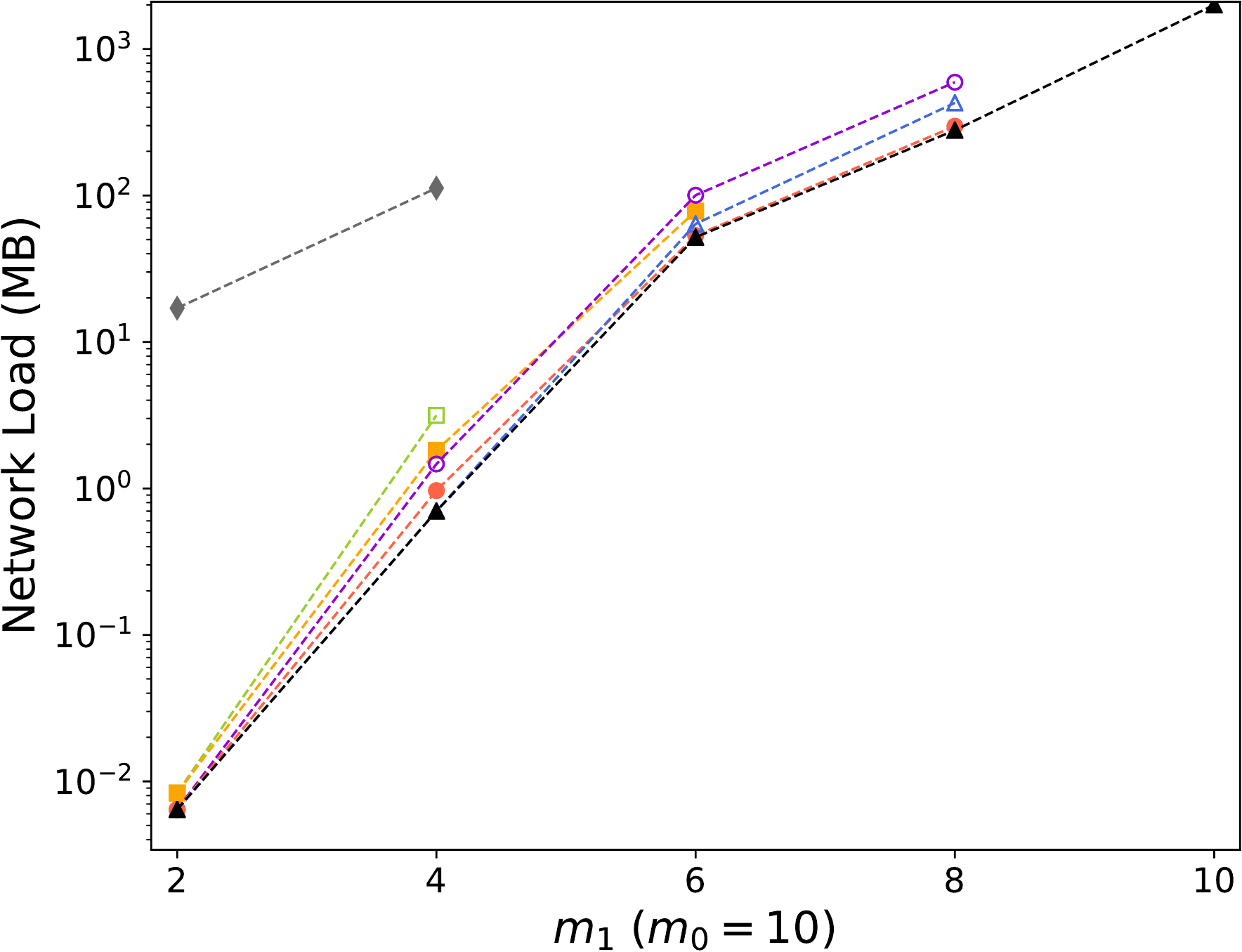} 
	}
	\subfloat[Runtime]{
		\includegraphics[scale=0.3]{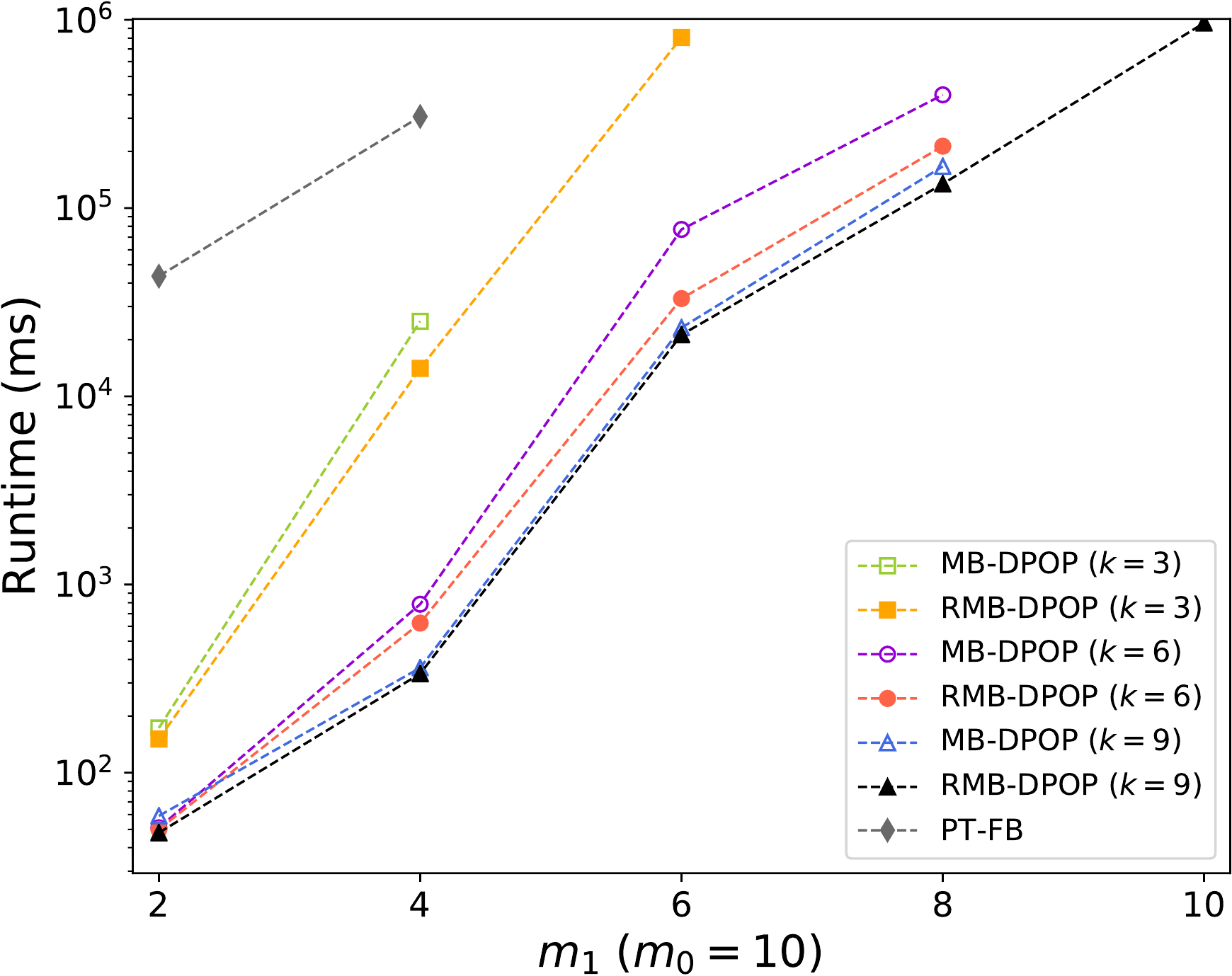} 
	}
	\caption{Performance comparison on scale-free networks}
	\label{scale-free}
\end{figure*}

\begin{figure}
	\includegraphics[scale=0.3]{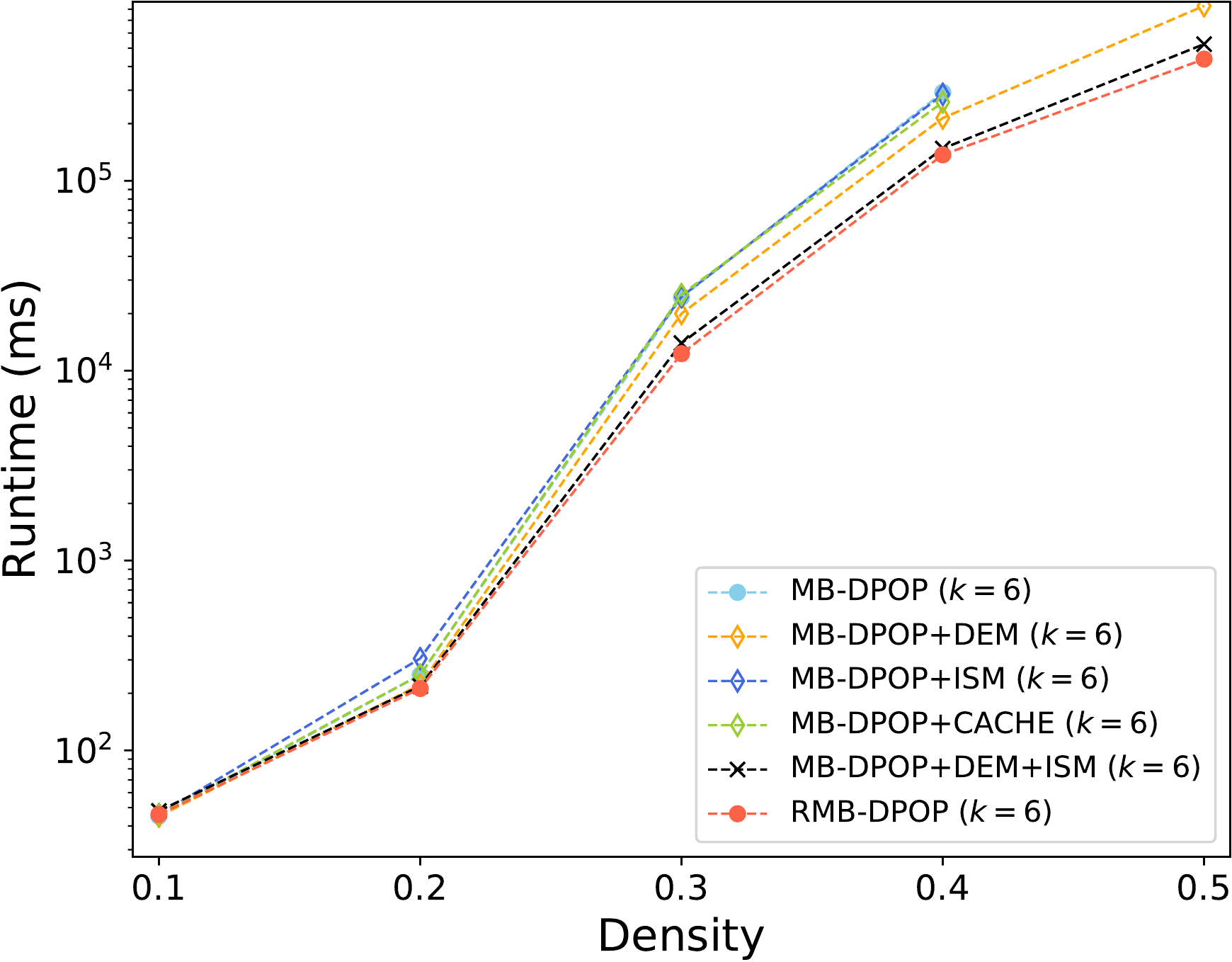}
	\caption{Performance of different mechanisms}
	\label{apart}
\end{figure}
 
\section{Experimental Evaluation}
In this section, we compare our proposed RMB-DPOP with the state-of-the-art on various benchmarks, and present an ablation study to demonstrate the effectiveness of each mechanism.

\subsection{Experimental Configuration}
We empirically evaluate RMB-DPOP, PT-FB, DPOP and MB-DPOP on two types of problems, i.e., random DCOPs and scale-free networks \cite{BarabEmergence}. In the first configuration, we consider the random DCOPs with the graph density of 0.2, the domain size of 3 and the agent number varying from 18 to 34. The second configuration is the DCOPs with 20 agents, the graph density of 0.2 and the domain size varying from 3 to 6. In addition, we present the ratio of the problems successfully solved within limited time on the second configuration where the graph density is set to 0.5. In the third configuration, we consider the scale-free networks generated by Barabási-Albert model where we set the agent number to 26, the domain size to 3 and $m_0$ to 10 and vary $m_1$ from 2 to 10.

In our experiments, we use the message number and network load (i.e., the size of total information exchanged) to measure the communication overheads. Also, we use wall clock time to measure the runtime. For each experiment, we generate 50 random instances and report the medium over all the instances. The experiments are conducted on an i7-7820x workstation with 32GB of memory and we set the timeout to 30 minutes for each algorithm. To demonstrate the effects of different $k$, we benchmark RMB-DPOP with different $k$ varying from 3 to 9. Finally, for fairness, we set the maximal number of dimensions for DPOP to 9.

\subsection{Experimental Results}
Fig. \ref{agent} presents the experiment results under different agent numbers. It can be seen from the figure that PT-FB cannot solve the problems with the agent number greater than 22. That is due to the fact that the search-based solvers need to explicitly exhaust the solution space by message-passing, which is quite expensive when solving the problems with the large agent number. Similarly, given the memory budget, DPOP also fails to scale up to the problems with the agent number more than 22. On the other hand, the scalability of the memory-bounded inference solvers depends on the size of $k$. For example, MB-DPOP ($k=3$) can only scale up to the problems with 26 agents while MB-DPOP ($k=9$) can solve the ones with 30 agents. This is because a large $k$ leads to fewer cycle-cut nodes and can significantly reduce the number of the memory-bounded inferences. Among these memory-bounded inference algorithms, given the same $k$, RMB-DPOP substantially outperforms MB-DPOP in both communication overheads and runtime. Besides, except for $k=6$, RMB-DPOP ($k$) can solve the problems with the larger agent number than MB-DPOP ($k$), which demonstrates our proposed mechanisms can improve the scalability of MB-DPOP. It is worth noting that the network load of RMB-DPOP ($k=3$) is less than MB-DPOP ($k=6$) when solving the problems with 30 agents, which indicates the merit of our proposed ISM. 

Fig. \ref{density} presents the results when solving the problems with different graph densities. It can be concluded from the figure that DPOP can only solve the problems with the density of 0.1 under this configuration. Besides, given the same $k$, RMB-DPOP ($k$) outperforms MB-DPOP ($k$) on all the metrics. Moreover, the gaps between RMB-DPOP and MB-DPOP are widen as the density grows, which demonstrates the potential of RMB-DPOP for reducing redundant inferences. Besides, it is noteworthy that RMB-DPOP with stricter memory budget can still outperform MB-DPOP with relatively large $k$ in terms of network load. For example, the network load of RMB-DPOP ($k=6$) is even less than the one of MB-DPOP ($k=9$) when solving dense problems. The same phenomenon also appears in solving the problems with the density of 0.3. The phenomena indicate that the redundant inference in MB-DPOP grows quickly as growing the graph density, while our proposed algorithm can effectively reduce unnecessary inferences. Fig. \ref{success} shows the ratio of the problems successfully solved within different time limits on this configuration where the graph density is 0.5. It can be seen that RMB-DPOP ($k=9$) solves over 90$\%$ of the problems in 15 minutes, while the success rate of MB-DPOP ($k=9$) is less than 80$\%$. Besides, RMB-DPOP ($k=9$) solves 60$\%$ problems in 6 minutes, while MB-DPOP ($k=9$) needs another 3 minutes (i.e., 9 minutes) to make that rate, which demonstrates the great superiority of the proposed algorithm again.

Fig. \ref{scale-free} shows the performance comparison when solving scale-free network problems with different $m_1$. PT-FB still cannot scale up due to prohibitively large search space. And it is worth mentioning that DPOP fails to solve all these problems, which demonstrates the poor scalability of DPOP under memory-limited scenarios. The reason is because the pseudo trees of scale-free network problems have induced width greater than 9 when $m_1=2$. On the other hand, our algorithm exhibits great advantage over MB-DPOP on all the metrics. The results show that RMB-DPOP ($k=9$) successfully solves all the problems, and except for $k=6$, RMB-DPOP ($k$) can solve the problems with larger $m_1$ than MB-DPOP ($k$). Although the scalability of RMB-DPOP ($k=6$) seem to be the same with the one of MB-DPOP ($k=6$), it can be seen that RMB-DPOP ($k=6$) incurs less network load than MB-DPOP ($k=9$) and its runtime closes to MB-DPOP ($k=9$) when $m_1 \ge 6$. It also demonstrates the scalability of our algorithm. 

Fig. \ref{apart} presents an ablation study on the second configuration with $k=6$ to demonstrate the effectiveness of each mechanism. It can be seen that the performance of MB-DPOP can be improved by each single mechanism when solving the dense problems, and can be further enhanced via the combinations of DEM and ISM. That is because ISM tends to choose the CC nodes inside clusters and DEM can effectively exploit these CC nodes to reduce the nonconcurrent instantiations. Without DEM and ISM, the contribution of caching mechanism is quite limited, but the combination of all the mechanisms achieves the best performance.

\section{Conclusions}
MB-DPOP suffers from a severe redundancy in memory-bounded inference due to the inability of exploiting the structure of a problem. In this paper, we propose a novel algorithm named RMB-DPOP which incorporates three mechanisms to reduce the redundancy in memory-bounded inference. First, we propose a distributed enumeration mechanism to make use of the independence among different branches to reduce the number of nonconcurrent instantiations. Second, we propose an iterative selection mechanism to refine the cycle-cut node selection, which aims to make each cycle-cut node to cover a maximum of the nodes with the width greater than $k$ in a cluster. Finally, a caching mechanism that exploits the historical inference results is introduced to further avoid unnecessary inferences. We theoretically prove that the distributed enumeration mechanism can reduce the message number if there is at least one cycle-cut node inside the clusters. Our experimental evaluations demonstrate the superiority of RMB-DPOP.

We note that our proposed mechanisms can be adapted to other algorithms as well. In more detail, the caching mechanism could be applied to an iterative process with recurrent combinations. Moreover, the selection mechanism could be used in other memory-bounded inference like ADPOP\cite{adpop2005} and HS-CAI\cite{hscai} to choose more appropriate variables to approximate or decimate. Also, this mechanism is highly customizable when combining with other algorithms. For example, we could easily implement different heuristics by changing the definition of $Eff_i$ for each candidate. Therefore, we envisage that these mechanisms not only advance the development of MB-DPOP, but also contribute to the algorithmic design of DCOPs.

\begin{acks}
We would like to thank the anonymous reviewers for their valuable comments and helpful suggestions. This work is partially supported by the Chongqing Research Program of Basic Research and Frontier Technology under Grant No.: cstc2017jcyjAX0030, the National Natural Science Foundation of China under Grant No.: 51608070 and the Graduate Research and Innovation Foundation of Chongqing, China under Grant No.: CYS18047.
\end{acks}


\bibliographystyle{ACM-Reference-Format}  
\bibliography{sample-bibliography}  

\end{document}